\newcommand{\flow}{\mathsf{flow}}
\newcommand{\jump}{\mathsf{jump}}
\newcommand{\inv}{\mathsf{inv}}
\newcommand{\init}{\mathsf{init}}
\newcommand{\reach}{\mathsf{Reach}}
\newcommand{\unsafe}{\mathsf{unsafe}}
\newcommand{\np}{\mathsf{NP}}
\newcommand{\lrf}{\mathcal{L}_{\mathbb{R}_{\mathcal{F}}}}
\newtheorem{notation}[theorem]{Notation}
\newcommand{\enforce}{\mathsf{enforce}}
\newcommand{\traj}{\mathsf{traj}}
\title{$\delta$-Complete Analysis for Bounded Reachability of Hybrid Systems}
\author{Sicun Gao \and Soonho Kong \and Wei Chen \and Edmund M. Clarke}
\institute{Carnegie Mellon University, Pittsburgh, PA 15213}
\begin{document}
\maketitle

\begin{abstract}
We present the framework of $\delta$-complete analysis for bounded reachability problems of general hybrid systems. We perform bounded reachability checking through solving $\delta$-decision problems over the reals. The techniques take into account of robustness properties of the systems under numerical perturbations. We prove that the verification problems become much more mathematically tractable in this new framework. Our implementation of the techniques, an open-source tool {\sf dReach}, scales well on several highly nonlinear hybrid system models that arise in biomedical and robotics applications.
\end{abstract}

\section{Introduction}
Formal verification is difficult for hybrid systems with nonlinear dynamics and complex discrete controls~\cite{DBLP:conf/emsoft/Alur11,DBLP:conf/lics/Henzinger96}. A major difficulty of applying advanced verification techniques in this domain comes from the need of solving logic formulas over the real numbers with nonlinear functions, which is notoriously hard. Recently, we have defined the {\em $\delta$-decision problem} that is much easier to solve~\cite{DBLP:conf/lics/GaoAC12,DBLP:conf/cade/GaoAC12}. Given an arbitrary positive rational number $\delta$, the $\delta$-decision problem asks if a logic formula is false or {\em $\delta$-true} (or, dually, true or {\em $\delta$-false}). The latter answer can be given, if the formula {\em would be true} under $\delta$-bounded numerical changes on its syntactic form~\cite{DBLP:conf/lics/GaoAC12}. The $\delta$-decision problem is decidable for bounded first-order sentences over the real numbers with arbitrary Type 2 computable functions. Type 2 computable functions~\cite{CAbook} are essentially real functions that can be approximated numerically. They cover almost all functions that can occur in realistic hybrid systems, such as polynomials, trigonometric functions, and solutions of Lipschitz-continuous ODEs. The goal of this paper is to develop a new framework for solving bounded reachability problems for hybrid systems based on solving $\delta$-decisions. We prove that this framework makes bounded reachability of hybrid systems a much more mathematically tractable problem and show that our practical implementation can handle highly nonlinear hybrid systems.

The framework of {\em $\delta$-complete analysis} consists of techniques that perform verification and allow bounded errors on the safe side. For bounded reachability problems, $\delta$-complete analysis aims to find one of the following answers:
\begin{itemize}
\item {\sf safe} (bounded): The system does not violate the safety property within a bounded period of time and a bounded number of discrete mode changes.
\item {\sf {$\delta$-unsafe}}: The system would violate the safety property under some $\delta$-bounded numerical perturbations.
\end{itemize}
Thus, when the answer is {\sf safe}, no error is involved. On the other hand, a system that is {\sf $\delta$-unsafe} would violate the safety property under bounded numerical perturbations. Realistic hybrid systems interact with the physical world and it is impossible to avoid slight perturbations. Thus, {\sf $\delta$-unsafe} systems should indeed be regarded as unsafe, under reasonable choices of $\delta$. Note that such robustness problems can not be discovered by solving the precise decision problem, and the use of $\delta$-decisions strengthens the verification results.

$\delta$-Complete reachability analysis reduces verification problems to $\delta$-decision problems of formulas over the reals. It follows from $\delta$-decidability of these formulas~\cite{DBLP:conf/lics/GaoAC12} that $\delta$-complete reachability analysis of a wide range of nonlinear hybrid systems is decidable. Such results stand in sharp contrast to the standard high undecidability of bounded reachability for simple hybrid systems.

We emphasize that the new framework is immediately practical. We implemented the techniques in our open-source tool {\sf dReach} based on our nonlinear SMT solver {\sf dReal}~\cite{DBLP:conf/cade/GaoKC13}. In our previous work, we have shown the underlying solver scales on nonlinear systems~\cite{DBLP:conf/fmcad/GaoKC13}. The tool successfully verified safety properties of various nonlinear models that are beyond the scope of existing tools.

The paper is organized as follows. After a short review of $\delta$-decidability, we show how to represent hybrid systems with $\lrf$-formulas and how to interpret trajectories through semantics of the formulas in Section 2. Then we focus on bounded reachability and show the encoding in $\lrf$ in Section 3. In Section 4, we show experimental results of our open-source implementation on highly nonlinear hybrid systems, and discuss the comparison with reachable set computation techniques in Section 5 and conclude in Section 6.

\paragraph{Related Work. } Our framework can be seen as a converging point for several lines of existing work. First of all, the use of logic formulas to express model checking of hybrid systems dates back to~\cite{DBLP:conf/hybrid/AlurCHH92,DBLP:journals/entcs/AudemardBCS05}, where formulas with linear arithmetic over the reals are used.  The lack of an appropriate logic for encoding nonlinear systems beyond real arithmetic has been a major bottleneck in this direction. Second, the realization that robustness assumptions help reduce verification complexity as been realized frequently. Franzle's work~\cite{DBLP:conf/csl/Franzle99} was among the first to recognize that verification problems are more tractable when robustness is assumed for polynomial hybrid systems. The direction was continued with more positive results such as~\cite{Ratschan10}. These works present theoretical results that do not directly translate to practical solving techniques, and the results are sensitive to the definitions. For instance, it is also shown in \cite{DBLP:conf/hybrid/HenzingerR00} that a slightly different notion of robustness and noise does not improve the theoretical properties. We focus on formulating a framework that directly corresponds to practical solving techniques, and the positive theoretical results follow naturally at the same time. There has also been much recent work on using constraint solving techniques for solving hybrid systems~\cite{DBLP:journals/jlp/FranzleTE10,DBLP:conf/icons/HerdeEFT08,DBLP:conf/cav/GulwaniT08,DBLP:conf/rtss/ChenAS12}, as well as solving frameworks that exploit robustness properties of the systems~\cite{DBLP:conf/rtss/PrabhakarVVD09,DBLP:conf/hybrid/HuangM12}. These methods can all handle nonlinear dynamics to certain degrees (mostly polynomial systems, with the exception of~\cite{DBLP:conf/rtss/ChenAS12} which we will mention again in the experiments). We aim to extend these works to a most broad class of nonlinear hybrid systems, and provide precise correctness guarantees. We also provide an open-source implementation that scales well on highly nonlinear systems that arise in practical applications.



\section{$\lrf$-Representations of Hybrid Automata}\label{review}

\subsection{$\lrf$-Formulas and $\delta$-Decidability}
We will use a logical language over the real numbers that allows arbitrary {\em computable real functions}~\cite{CAbook}. We write $\lrf$ to represent this language. Intuitively, a real function is computable if it can be numerically simulated up to an arbitrary precision. For the purpose of this paper, it suffices to know that almost all the functions that are needed in describing hybrid systems are Type 2 computable, such as polynomials, exponentiation, logarithm, trigonometric functions, and solution functions of Lipschitz-continuous ordinary differential equations.

More formally, $\lrf = \langle \mathcal{F}, > \rangle$ represents the first-order signature over the reals with the set $\mathcal{F}$ of computable real functions, which contains all the functions mentioned above. Note that constants are included as 0-ary functions. $\lrf$-formulas are evaluated in the standard way over the structure $\mathbb{R}_{\mathcal{F}}= \langle \mathbb{R}, \mathcal{F}^{\mathbb{R}}, >^{\mathbb{R}}\rangle$. It is not hard to see that  we can put any $\lrf$-formula in a normal form, such that its atomic formulas are of the form $t(x_1,...,x_n)>0$ or $t(x_1,...,x_n)\geq 0$, with $t(x_1,...,x_n)$ composed of functions in $\mathcal{F}$. To avoid extra preprocessing of formulas, we can explicitly define $\mathcal{L}_{\mathcal{F}}$-formulas as follows.
\begin{definition}[$\lrf$-Formulas]
Let $\mathcal{F}$ be a collection of computable real functions. We define:
\begin{align*}
t& := x \; | \; f(t(\vec x)), \mbox{ where }f\in \mathcal{F} \mbox{ (constants are 0-ary functions)};\\
\varphi& := t(\vec x)> 0 \; | \; t(\vec x)\geq 0 \; | \; \varphi\wedge\varphi
\; | \; \varphi\vee\varphi \; | \; \exists x_i\varphi \; |\; \forall x_i\varphi.
\end{align*}
In this setting $\neg\varphi$ is regarded as an inductively defined operation
which replaces atomic formulas $t>0$ with $-t\geq 0$, atomic formulas $t\geq 0$
with $-t>0$, switches $\wedge$ and $\vee$, and switches $\forall$ and $\exists$.
\end{definition}
\begin{definition}[Bounded $\lrf$-Sentences]
We define the bounded quantifiers $\exists^{[u,v]}$ and $\forall^{[u,v]}$ as
$\exists^{[u,v]}x.\varphi =_{df}\exists x. ( u \leq x \land x \leq v \wedge
\varphi)$ and $
\forall^{[u,v]}x.\varphi =_{df} \forall x. ( (u \leq x \land x \leq v)
\rightarrow \varphi)$
where $u$ and $v$ denote $\lrf$ terms, whose variables only
contain free variables in $\varphi$ excluding $x$. A {\em bounded $\lrf$-sentence} is
$$Q_1^{[u_1,v_1]}x_1\cdots Q_n^{[u_n,v_n]}x_n\;\psi(x_1,...,x_n),$$
where $Q_i^{[u_i,v_i]}$ are bounded quantifiers, and $\psi(x_1,...,x_n)$ is
quantifier-free.
\end{definition}
\begin{definition}[$\delta$-Variants]\label{variants}
Let $\delta\in \mathbb{Q}^+\cup\{0\}$, and $\varphi$ an
$\lrf$-formula
$$\varphi: \ Q_1^{I_1}x_1\cdots Q_n^{I_n}x_n\;\psi[t_i(\vec x, \vec y)>0;
t_j(\vec x, \vec
y)\geq 0],$$ where $i\in\{1,...k\}$ and $j\in\{k+1,...,m\}$. The {\em
$\delta$-weakening} $\varphi^{\delta}$ of $\varphi$ is
defined as the result of replacing each atom $t_i > 0$ by $t_i >
-\delta$ and $t_j \geq 0$ by $t_j \geq -\delta$:
$$\varphi^{\delta}:\ Q_1^{I_1}x_1\cdots Q_n^{I_n}x_n\;\psi[t_i(\vec x, \vec
y)>-\delta; t_j(\vec x,
\vec y)\geq -\delta].$$
It is clear that $\varphi\rightarrow\varphi^{\delta}$~(see \cite{DBLP:conf/lics/GaoAC12}).
\end{definition}
In~\cite{DBLP:conf/lics/GaoAC12,DBLP:conf/cade/GaoAC12}, we have proved that the following $\delta$-decision problem is decidable, which is the basis of our framework.
\begin{theorem}[$\delta$-Decidability]\label{delta-decide} Let $\delta\in\mathbb{Q}^+$ be
arbitrary. There is an algorithm which, given any bounded $\lrf$-sentence $\varphi$,
correctly returns one of the following two answers:
\begin{itemize}
\item $\delta$-$\mathsf{True}$: $\varphi^{\delta}$ is true.
\item $\mathsf{False}$: $\varphi$ is false.
\end{itemize}
When the two cases overlap, either answer is correct.
\end{theorem}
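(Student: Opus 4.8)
The plan is to reduce the decision between ``$\varphi^\delta$ true'' and ``$\varphi$ false'' to approximating a single computable real number attached to $\varphi$. Define a real-valued evaluation $F(\varphi)$ of a bounded $\lrf$-sentence by recursion on its structure: interpret an atom $t>0$ or $t\geq 0$ by the value of the term $t$ itself, interpret $\wedge$ by $\min$ and $\vee$ by $\max$, and interpret the bounded quantifiers by $\exists^{[u,v]}x\mapsto\sup_{x\in[u,v]}$ and $\forall^{[u,v]}x\mapsto\inf_{x\in[u,v]}$. Since every term is a composition of functions in $\mathcal{F}$, hence continuous, the matrix is continuous and all suprema and infima are taken over compact intervals, and are therefore attained (I will assume, or arrange by convention, that the quantifier intervals are nonempty, handling $\pm\infty$ otherwise). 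The two properties I want are: (i) \emph{sign soundness}, $F(\varphi)>0 \Rightarrow \varphi$ is true and $F(\varphi)<0 \Rightarrow \varphi$ is false; and (ii) \emph{weakening shift}, $F(\varphi^\delta)=F(\varphi)+\delta$.

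Both (i) and (ii) I would prove by a single induction on the structure of $\varphi$. For atoms, $F>0$ forces $t>0$ (so both $t>0$ and $t\geq 0$ hold) and $F<0$ forces $t<0$ (so both fail); the boundary value $F=0$ is deliberately left unresolved, as it need not be. The inductive step is routine once one observes that $\min$ and $\max$ transmit strict positivity and strict negativity exactly as $\wedge$ and $\vee$ transmit truth and falsity, and that $\sup F_\varphi>0$ yields a witness point where $F_\varphi>0$ while $\sup F_\varphi<0$ makes $F_\varphi$ negative everywhere (dually for $\inf$ and $\forall$). For (ii), the $\delta$-weakening replaces each atom value $t$ by $t+\delta$, and both $\min,\max$ and $\sup,\inf$ commute with adding the constant $\delta$, so the shift propagates through the whole recursion. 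Combining, $\varphi^\delta$ is true whenever $F(\varphi)>-\delta$, and $\varphi$ is false whenever $F(\varphi)<0$.

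The heart of the argument, and the step I expect to be the main obstacle, is showing that $F(\varphi)$ is a \emph{computable} real number, so that we can compute a rational $q$ with $|q-F(\varphi)|<\delta/2$. Composition, $\min$, and $\max$ preserve computability immediately; the difficulty is the maximization and minimization of a computable function over a compact interval, together with the fact that the bounds $u_i,v_i$ are themselves terms in the outer variables, so these operations must be performed uniformly in parameters. Here I would invoke Type 2 computability in an essential way: on the compact product domain the matrix admits a computable modulus of continuity, which lets a finite grid search over dyadic points approximate $\sup$ and $\inf$ to any prescribed precision, and this approximation is itself computable in the parameters (continuity in the parameters following from a maximum-theorem argument). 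Applying this maximization theorem of computable analysis recursively, from the innermost quantifier outward, yields a procedure that outputs arbitrarily good rational approximations of $F(\varphi)$.

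Finally I would assemble the algorithm. Compute $q$ with $|q-F(\varphi)|<\delta/2$. If $q>-\delta/2$, return $\delta$-$\mathsf{True}$: then $F(\varphi)>q-\delta/2>-\delta$, so $F(\varphi^\delta)=F(\varphi)+\delta>0$ and $\varphi^\delta$ is true by (i). Otherwise $q\le-\delta/2$, and we return $\mathsf{False}$: then $F(\varphi)<q+\delta/2\le 0$, so $\varphi$ is false by (i). Correctness in the two cases that must not be confused is then automatic: if $\varphi$ is genuinely true then $F(\varphi)>0$ forces $q>-\delta/2$ and the first branch is taken, while if $\varphi^\delta$ is genuinely false then $F(\varphi)\le-\delta$ forces $q<-\delta/2$ and the second branch is taken. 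The remaining region, where $\varphi$ is false but $\varphi^\delta$ is true, is exactly the overlap in which either answer is permitted.
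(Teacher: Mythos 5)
The paper itself does not prove Theorem~\ref{delta-decide}: it is imported from the cited prior work~\cite{DBLP:conf/lics/GaoAC12}, so there is no in-paper proof to compare against. Your argument is a correct and essentially standard route to the result, and its analytic core --- a computable modulus of continuity for the terms over the compact quantification domain, enabling a finite grid/interval search that approximates the quantifiers, with the $\delta$-slack absorbing the approximation error --- is the same as in that cited proof; the real-valued semantics $F$ with $\min$, $\max$, $\sup$, $\inf$ is just a clean way of packaging that recursion, and your sign-soundness and shift lemmas are both right, as is the final $\delta/2$-approximation argument. The one place that needs more care than you give it is the bounded quantifier whose endpoints $u,v$ are terms in the outer variables: if $[u(\vec y),v(\vec y)]$ can be empty for some admissible outer values, then $F$ jumps to $-\infty$ (resp.\ $+\infty$) discontinuously and ceases to be a computable real-valued function of the parameters, while the alternative of folding the constraints $u\leq x$ and $x\leq v$ into the matrix would break the exact shift $F(\varphi^\delta)=F(\varphi)+\delta$, because Definition~\ref{variants} perturbs only the atoms of the matrix and not the quantifier bounds. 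This is repairable --- assume as a well-formedness condition that $u\leq v$ holds throughout the relevant compact domain, after which the parametrized maximum is computable by reparametrizing $[u(\vec y),v(\vec y)]$ onto $[0,1]$ --- but it should be stated as an explicit hypothesis rather than left to convention.
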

\begin{theorem}[Complexity]\label{compmain}
Let $S$ be a class of $\lrf$-sentences, such that for any $\varphi$ in $S$, the terms in $\varphi$ are in Type 2 complexity class $\mathsf{C}$. Then, for any $\delta\in \mathbb{Q}^+$, the $\delta$-decision problem for bounded $\Sigma_n$-sentences in $S$ is in $\mathsf{(\Sigma_n^P)^C}$.
\end{theorem}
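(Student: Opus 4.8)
\emph{Proof sketch.}
The plan is to analyse the discretized decision procedure underlying Theorem~\ref{delta-decide} and to charge its three ingredients---grid resolution, quantifier alternation, and term evaluation---against the three resources packaged in $(\Sigma_n^{\mathsf{P}})^{\mathsf{C}}$. Write the input sentence as
\[
\varphi:\ \exists^{[u_1,v_1]}x_1\,\forall^{[u_2,v_2]}x_2\cdots Q_n^{[u_n,v_n]}x_n\;\psi(x_1,\dots,x_n),
\]
with $\psi$ quantifier-free and its atoms of the form $t(\vec x)>0$ or $t(\vec x)\geq 0$. Each $t$ is computable, hence uniformly continuous on the box $B=\prod_i[u_i,v_i]$; fix a rational mesh $\epsilon>0$ small enough that every term varies by less than $\delta/2$ across any $\epsilon$-cell of $B$, and let $G$ be the resulting rational grid. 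Let $\widehat\varphi$ be $\varphi$ with each real quantifier restricted to $G\cap[u_i,v_i]$ and each atom replaced by its $(\delta/2)$-weakening. Reading the quantifier prefix as a game and rounding each player's move to the nearest grid point, the $O(\epsilon)$ displacement of the resulting play is absorbed by the weakening; this shows $\widehat\varphi$ true $\Rightarrow\varphi^{\delta}$ true and $\widehat\varphi$ false $\Rightarrow\varphi$ false, so deciding the finite statement $\widehat\varphi$ solves the $\delta$-decision problem exactly as in Theorem~\ref{delta-decide}.

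First I would bound the size of the grid. By hypothesis each $t$ is a function in the Type~2 complexity class $\mathsf{C}$, so the precision to which its real arguments must be known in order to pin $t(\vec x)$ down to the fixed tolerance $\delta/2$ is controlled by the resource bound of $\mathsf{C}$ at output precision $\lceil\log_2(2/\delta)\rceil$. Since $\delta$ is fixed, this modulus of continuity is polynomial in the description length of $\varphi$, so one may take $\epsilon$ with $\log(1/\epsilon)$ polynomial in the size of $\varphi$. Together with the polynomially many bits needed to encode the endpoints $u_i,v_i$, each coordinate of a grid point is then a rational of polynomial bit-length, and a full assignment $\vec{\hat x}\in G^{\,n}$ is a polynomial-size object.

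Next I would read off the alternation. The statement $\widehat\varphi$ is precisely an $n$-fold alternation $\exists\,\hat x_1\in G\;\forall\,\hat x_2\in G\cdots$ of bounded quantifiers over these polynomial-size grid points, beginning with $\exists$ because $\varphi\in\Sigma_n$. An alternating machine that guesses the existential coordinates and branches universally over the universal ones uses $n$ alternations and runs in polynomial time between nondeterministic steps---exactly the pattern defining $\Sigma_n^{\mathsf{P}}$. At a leaf the matrix $\psi$ is evaluated on a concrete $\vec{\hat x}$: for each of its polynomially many atoms we compute the rational value $t(\vec{\hat x})$ to precision $\delta/4$ and compare it with $-\delta/2$. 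Each such approximation is a single computation in $\mathsf{C}$, presented to the machine as one query to a $\mathsf{C}$-oracle (taking a term, a rational point, and a target precision, and returning an approximant), so a leaf issues polynomially many oracle calls. Combining the three observations, the procedure is a polynomial-time $\Sigma_n$-alternating computation equipped with a $\mathsf{C}$-oracle, whence the $\delta$-decision problem for $S$ lies in $(\Sigma_n^{\mathsf{P}})^{\mathsf{C}}$.

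I expect the main obstacle to be the grid-size bound of the second step: one must argue, from the definition of the Type~2 complexity measure, that a term in class $\mathsf{C}$ read to the precision dictated by the fixed $\delta$ has a modulus of continuity that stays polynomial, rather than super-polynomial, in the size of $\varphi$---so that grid coordinates keep polynomial bit-length and the guesses of the alternating machine remain short. One must also fix the interface of the $\mathsf{C}$-oracle so that evaluating a single term at a rational point to precision $\delta/4$ is genuinely one $\mathsf{C}$-computation. The alternation count and the leaf bookkeeping are then routine.
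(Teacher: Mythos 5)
This paper does not actually prove Theorem~\ref{compmain}; it is imported verbatim from the cited prior work (\cite{DBLP:conf/lics/GaoAC12,DBLP:conf/cade/GaoAC12}), so there is no in-paper proof to compare against. That said, your discretization argument --- a $\delta/2$-weakened grid relaxation $\widehat\varphi$ sandwiched between $\neg\varphi$ and $\varphi^{\delta}$ via the rounding game, with the alternation charged to $\Sigma_n^{\mathsf{P}}$ and term evaluation charged to a $\mathsf{C}$-oracle --- is essentially the route taken there (a relativization of Ko's results on bounded real quantification), and the sandwiching step is correct as you state it. The obstacle you flag is the genuine crux, and it is discharged by the definition of Type~2 complexity classes rather than by an extra argument: a term in $\mathsf{C}$ is computed by an oracle machine whose resource bound at output precision $k$ also bounds how many bits of its arguments it may query, which is exactly the uniform modulus of continuity needed to keep the grid coordinates short; note this must be read as applying to each whole term $t$ of $\varphi$ (as the theorem statement does), not to the individual function symbols, since moduli compound under composition. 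One loose end worth closing: the bounds $u_i,v_i$ are themselves $\lrf$-terms in the earlier variables, so the rounding of a player's move must be shown to stay inside (a weakened version of) the corresponding interval, which requires folding the bound constraints into the weakened matrix rather than treating $B$ as a fixed box.
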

\subsection{{\large$\lrf$}-Representations and Hybrid Trajectories}\label{language}
We first show that $\lrf$-formulas can concisely represent hybrid automata.
\begin{definition}[$\lrf$-Representations of Hybrid Automata]\label{lrf-definition}
A hybrid automaton in $\lrf$-representation is a tuple
\begin{multline*}
H = \langle X, Q, \{{\flow}_q(\vec x, \vec y, t): q\in Q\},\{\inv_q(\vec x): q\in Q\},\\
\{\jump_{q\rightarrow q'}(\vec x, \vec y): q,q'\in Q\},\{\init_q(\vec x): q\in Q\}\rangle
\end{multline*}
where $X\subseteq \mathbb{R}^n$ for some $n\in \mathbb{N}$, $Q=\{q_1,...,q_m\}$ is a finite set of modes, and the other components are finite sets of quantifier-free $\lrf$-formulas.
\end{definition}
\begin{notation}
For any hybrid system $H$, we write $X(H)$, $\flow(H)$, etc. to denote its corresponding components.
\end{notation}
Almost all hybrid systems studied in the existing literature can be defined by restricting the set of functions $\mathcal{F}$ in the signature. For instance,

\begin{example}[Linear and Polynomial Hybrid Automata] Let $\mathcal{F}^{\mathrm{lin}} = \{+\}\cup \mathbb{Q}$ and $\mathcal{F}^{\mathrm{poly}}=\{\times\}\cup\mathcal{F}^{\mathrm{lin}}$. Rational numbers are considered as 0-ary functions. In existing literature, $H$ is a {\em linear hybrid automaton} if it has an $\mathcal{L}_{\mathbb{R}_{\mathcal{F}^{\mathrm{lin}}}}$-representation, and a {\em polynomial hybrid automaton} if it has an $\mathcal{L}_{\mathbb{R}_{\mathcal{F}^{\mathrm{poly}}}}$-representation.
\end{example}

\begin{example}[Nonlinear Bouncing Ball]
The bouncing ball is a standard hybrid system model. Its nonlinear version (with air drag) can be $\lrf$-represented in the following way:
\begin{itemize}
\item $X = \mathbb{R}^2$ and $Q = \{q_u, q_d\}$. We use $q_u$ to represent bounce-back mode and $q_d$ the falling mode.
\item $\flow = \{\flow_{q_u}(x_0, v_0, x_t, v_t, t), \flow_{q_d}(x_0, v_0, x_t, v_t, t)\}$. We use $x$ to denote the height of the ball and $v$ its velocity. Instead of using time derivatives, we can directly write the flows as integrals over time, using $\lrf$-formulas:
\begin{itemize}
\item $\flow_{q_u}(x_0, v_0, x_t, v_t, t)$ defines the dynamics in the bounce-back phase:
$$(x_t = x_0 + \int_0^{t} v(s) ds) \wedge (v_t = v_0 + \int_0^t g(1-\beta v(s)^2) ds)$$
\item $\flow_{q_d}(x_0, v_0, x_t, v_t, t)$ defines the dynamics in the falling phase:
$$(x_t = x_0 + \int_0^{t} v(s) ds) \wedge (v_t = v_0 + \int_0^t g(1+\beta v(s)^2) ds)$$
\end{itemize}where
$\beta$ is a constant. Again, note that the integration terms define Type 2 computable functions.
\item $\jump = \{\jump_{q_u \rightarrow q_d} (x, v, x', v'), \jump_{q_d \rightarrow q_u} (x, v, x', v')\}$ where
\begin{itemize}
 \item $\jump_{q_u \rightarrow q_d} (x, v, x', v')$ is $(v= 0 \wedge x' = x \wedge v' = v)$.
\item $\jump_{q_d \rightarrow q_u} (x, v, x', v')$ is $(x= 0 \wedge v' = \alpha v\wedge x'=x)$,  for some constant $\alpha$.
\end{itemize}
\vspace{0.1cm}
\item $\init_{q_d}$ is $(x=10 \wedge v=0)$ and $\init_{q_u}$ is $\bot$.
\item $\inv_{q_d}$ is $(x>=0 \wedge v>=0)$ and $\inv_{q_u}$ is $(x>=0 \wedge v<=0)$.
\end{itemize}
\end{example}

Trajectories of hybrid systems combine continuous flows and discrete jumps. This motivates the use of a hybrid time domain, with which we can keep track of both the discrete changes and the duration of each continuous flow. A hybrid time domain is a sequence of closed intervals on the real line, and a hybrid trajectory is a mapping from the time domain to the Euclidean space. Formally, we use the following definition given by Davoren in \cite{DBLP:conf/hybrid/Davoren09}:
\begin{definition}[Hybrid Time Domains and Hybrid Trajectories~\cite{DBLP:conf/hybrid/Davoren09}]
A {\em hybrid time domain} is a subset of
 $\mathbb{N}\times \mathbb{R}$ of the form
$$T_m=\{(i, t): i<m \mbox{ and } t\in [t_i, t_i']\mbox{ or }[t_i, +\infty)\},$$
where $m\in \mathbb{N}\cup\{+\infty\}$, $\{t_i\}_{i=0}^m$ is an
increasing sequence in $\mathbb{R}^+$, $t_0= 0$, and $t_i'=t_{i+1}$. When $X\subseteq\mathbb{R}^n$ is an Euclidean space and $T_m$ a hybrid
time domain, a {\em hybrid trajectory} is a {\em continuous} mapping $\xi: T_m\rightarrow X.$ We can write the time domain $T_m$ of $\xi$ as $T(\xi)$.
 \end{definition}
We can now define trajectories of hybrid automata. To link hybrid trajectories with automata, we need a labeling function $\sigma_{\xi,H}(i)$ that maps each step $i$ in the hybrid trajectory to an appropriate discrete mode in $H$, and make sure that the $\flow, \jump, \inv, \init$ conditions are satisfied.
\begin{definition}[Trajectories of Hybrid Automata]\label{trajec}
Let $H$ be a hybrid automaton, $T_m$ a hybrid domain, and $\xi: T_m\rightarrow X$ a hybrid trajectory.
We say that $\xi$ is {\em a trajectory of $H$ of discrete depth $m$}, written as $\xi\in \llbracket H \rrbracket$, if there
exists a {\em labeling function} $\sigma_{\xi,H}: \mathbb{N}\rightarrow Q$ such
that:
\begin{itemize}
\item For some $q\in Q$, $\sigma_{\xi,H}(0) = q$ and $\mathbb{R}_{\mathcal{F}}\models \init_{q}(\xi(0,0))$.
\item For any $(i, t)\in T_m$, $\mathbb{R}_{\mathcal{F}}\models \inv_{\sigma_{\xi,H}(i)} (\xi(i,t))$.
\item For any $(i,t)\in T_m$,
\begin{itemize}
\item When $i=0$, $\mathbb{R}_{\mathcal{F}}\models\flow_{q_0}(\xi(0,0), \xi(0,t), t).$
\item When $i = k+1$, where $0<k+1<m$,
\begin{eqnarray*}
& &\mathbb{R}_{\mathcal{F}}\models\flow_{\sigma^H_{\xi}(k+1)}(\xi(k+1, t_{k+1}), \xi(k+1, t), (t - t_{k+1})), and\\
& &\mathbb{R}_{\mathcal{F}}\models \jump_{\sigma_{\xi,H}(k)\rightarrow\sigma_{\xi,H}(k+1)}(\xi(k, t_k'), \xi(k+1,t_{k+1})).
\end{eqnarray*}
\end{itemize}
\end{itemize}
\end{definition}
The definition is straightforward. In each mode, the system flows continuously following the dynamics defined by $\flow_q$. Note that $(t-t_k)$ is the actual duration in the $k$-th mode. When a switch between two modes is performed, it is required that $\xi(k+1, t_{k+1})$ is updated from the exit value $\xi(k, t_k')$ in the previous mode, following the jump conditions.
\begin{remark}[$\jump$ vs $\inv$] The jump conditions specify when $H$ {\em may}  switch to another mode. The invariants (when violated) specify when $H$ {\em must} switch to another mode. They will require different logical encodings.
\end{remark}
Note that we gave no restriction on the formulas that can be used for describing hybrid automata in Definition~\ref{lrf-definition}. A minimal requirement is that the $\flow$ predicates should define continuous trajectories over time, namely:
\begin{definition}[Well-Defined Flow Predicates]
Let $\flow(\vec x, \vec y, t)$ be a flow predicate for a hybrid automaton $H$. We say the flow predicate is {\em well-defined}, if for all tuples $(\vec a,\vec b, \tau)\in X(H)\times X(H)\times \mathbb{R}^{\geq 0}$ such that $\mathbb{R}\models\flow(\vec a,\vec b, \tau)$, there exists a continuous function $\eta:[0,\tau]\rightarrow X$ such that $\eta(0) = \vec a$, $\eta(\tau) = \vec b$, and for all $t'\in [0,\tau]$, we have $\mathbb{R}\models\flow(\vec a, \eta(t), t)$. We say $H$ is {\em well-defined} if all its flow predicates are well-defined.
\end{definition}
This definition requires that we can always construct a trajectory from the end points and the initial points that satisfy a flow predicate. Flows that are defined using differential equations, differential inclusions, and explicit continuous mappings all satisfy this condition. Thus, from now on our discussion of hybrid automata assume their well-definedness. 
\subsection{$\delta$-Perturbations}
We can now define $\delta$-perturbations on hybrid automata directly through perturbations on the logic formulas in their $\lrf$-representations. For any set $S$ of $\lrf$-formulas, we write $S^{\delta}$ to denote the set containing the $\delta$-perturbations of all elements of $S$.
\begin{definition}[$\delta$-Weakening of Hybrid Automata] Let $\delta\in\mathbb{Q}^+\cup\{0\}$ be arbitrary. Suppose
$$H = \langle X, Q, \flow, \jump, \inv, \init\rangle$$
is an $\lrf$-representation of hybrid system $H$. The {\em $\delta$-weakening} of $H$ is
$$H^{\delta} = \langle X, Q, \flow^{\delta}, \jump^{\delta}, \inv^{\delta}, \init^{\delta}\rangle$$
which is obtained by weakening all formulas in the $\lrf$-representations of $H$.
\end{definition}

\begin{example}
The $\delta$-weakening of the bouncing ball automaton is obtained by weakening the formulas in its description. For instance, $\flow_{q_u}^{\delta}(x_0, v_0, x_t, v_t, t)$ is
{$$|x_t - (x_0 + \int_0^{t} v(s) ds)|\leq \delta \wedge |v_t - (v_0 + \int_0^t g(1-\beta v(s)^2) ds))|\leq \delta$$}
and $\jump_{q_d \rightarrow q_u}^{\delta} (x, v, x', v')$ is {$$|x|\leq \delta \wedge |v' - \alpha v|\leq \delta \wedge |x'-x|\leq \delta.$$}
\end{example}
\begin{remark} It is important to note that the notion of $\delta$-perturbations is a purely syntactic one (defined on the description of hybrid systems) instead of a semantic one (defined on the trajectories). The syntactic perturbations correspond to semantic over-approximation of $H$ in the trajectory space.
\end{remark}
\begin{proposition} For any $H$ and $\delta\in\mathbb{Q}^+\cup\{0\}$, $\llbracket H\rrbracket\subseteq \llbracket H^{\delta}\rrbracket$.
\end{proposition}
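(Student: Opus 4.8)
The plan is to show that any trajectory witnessing membership in $\llbracket H\rrbracket$ also witnesses membership in $\llbracket H^{\delta}\rrbracket$, using the \emph{same} trajectory and the \emph{same} labeling function. The whole argument rests on the pointwise monotonicity of $\delta$-weakening recorded in Definition~\ref{variants}, namely that $\varphi\rightarrow\varphi^{\delta}$ holds for every $\lrf$-formula $\varphi$ once $\delta\geq 0$.

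First I would fix an arbitrary $\xi\in\llbracket H\rrbracket$ of discrete depth $m$, together with a labeling function $\sigma=\sigma_{\xi,H}:\mathbb{N}\rightarrow Q$ satisfying the four conditions of Definition~\ref{trajec}. Since $H$ and $H^{\delta}$ share the same state space $X$ and the same mode set $Q$ (only the defining formulas are weakened), the map $\sigma$ is again a syntactically valid labeling function for $H^{\delta}$, and the hybrid time domain $T(\xi)$ together with the continuity of $\xi$ are untouched. Thus it suffices to re-verify the init, invariant, flow, and jump conditions with the weakened predicates, keeping $\xi$ and $\sigma$ fixed.

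Next I would discharge each of the four conditions by a single appeal to monotonicity. Because each of $\init_q$, $\inv_q$, $\flow_q$, and $\jump_{q\rightarrow q'}$ is a quantifier-free $\lrf$-formula whose $\delta$-weakening is exactly its $\delta$-variant, the implication $\varphi\rightarrow\varphi^{\delta}$ applies \emph{at every point}: from $\mathbb{R}_{\mathcal{F}}\models\init_{\sigma(0)}(\xi(0,0))$ we immediately get $\mathbb{R}_{\mathcal{F}}\models\init^{\delta}_{\sigma(0)}(\xi(0,0))$, and identically for the invariant condition at every $(i,t)\in T(\xi)$, for the flow condition in each mode, and for the jump condition at each mode switch. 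Collecting these four facts shows that $\sigma$ witnesses $\xi\in\llbracket H^{\delta}\rrbracket$, which is the desired inclusion.

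I expect no serious obstacle: the only point that deserves care is that $\varphi\rightarrow\varphi^{\delta}$ must be read as a \emph{pointwise} (every-assignment) implication rather than a statement about closed sentences, since the predicates are evaluated at the concrete arguments supplied by $\xi$. This is immediate, however, because weakening acts atomwise, $t>0$ entails $t>-\delta$ and $t\geq 0$ entails $t\geq -\delta$ for every $\delta\geq 0$, and these entailments are preserved under the monotone connectives $\wedge$ and $\vee$ out of which the quantifier-free predicates are built. The degenerate case $\delta=0$ gives $H^{0}=H$ and is trivial, so the inclusion holds uniformly for all $\delta\in\mathbb{Q}^{+}\cup\{0\}$.
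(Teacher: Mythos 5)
Your proof is correct and follows essentially the same route as the paper's: take the same trajectory and labeling function, and use the atomwise monotonicity $\varphi\rightarrow\varphi^{\delta}$ from Definition~\ref{variants} to re-verify each of the init, invariant, flow, and jump conditions for the weakened automaton. Your explicit remark that the implication must be read pointwise over assignments (and that it propagates through $\wedge$ and $\vee$) is a detail the paper leaves implicit, but it does not change the argument.
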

\begin{proof}
Let $\xi\in \llbracket H\rrbracket$ be any trajectory of $H$. Following
Definition~\ref{variants}, for any $\lrf$ sentence $\varphi$, we have
$\varphi\rightarrow\varphi^{\delta}$. Since
$\xi$ satisfies the conditions in Definition~\ref{trajec}, after replacing each
formula by their $\delta$-weakening, we have $\xi\in \llbracket H^{\delta}\rrbracket$.
\end{proof}

\subsection{Reachability}
We can now formally state the reachability problem for hybrid automata using $\lrf$-representations and their interpretations.
\begin{definition}[Reachability]\label{reachability}\index{Reachability}
Let $H$ be an $n$-dimensional hybrid automaton, and $U$ a subset of its state
space $Q\times X$.  We say {\em $U$ is reachable by $H$}, if there exists
$\xi\in\llbracket
H \rrbracket$,such that there exists $(i,t)\in T(\xi)$ satisfying
$(\sigma^H_{\xi}(i), \xi(i,t))\in U.$
\end{definition}
The bounded reachability problem for hybrid systems is defined by restricting
the continuous time duration to a bounded interval, and the number
of discrete transitions to a finite number.
\begin{definition}[Bounded Reachability]
Let $H$ be an $n$-dimensional hybrid automaton, whose continuous state space
$X$ is a bounded subset of $\mathbb{R}^n$. Let $U$ be a subset of its state
space. Set $k\in \mathbb{N}$ and $M \in \mathbb{R}^{\geq 0}$. The {$(k,M)$-bounded
reachability problem} asks whether there exists
$\xi\in\llbracket H \rrbracket$ such that there exists $(i,t)\in T(\xi)$ with $i\leq k$, $t=
\sum_{i=0}^k t_i$ where $t_i \leq M$, and $(\sigma_{\xi}(i), \xi(i,t))\in U.$
\end{definition}
\begin{remark}
By ``step", we mean the number of discrete jumps. We say $H$ can reach $U$ in $k$ steps, if there exists $\xi\in\llbracket H\rrbracket$ that contains $k$ discrete jumps, which consists of $k+1$ pieces of continuous flows in the corresponding discrete modes.
\end{remark}

In the seminal work of \cite{DBLP:conf/rex/AlurD91,DBLP:conf/hybrid/AlurCHH92}, it is already shown that the bounded reachability problem for simple classes of hybrid automata is undecidable. The goal of $\delta$-complete analysis is to bypass much of this difficulty.

\section{$\delta$-Complete Analysis for Bounded Reachability}\label{main}
\subsection{Encoding Bounded Reachability in $\lrf$}
We now define the $\lrf$-encoding of bounded reachability. The encodings are standard bounded model checking, and have been studied in existing work but without the generality of a full $\lrf$-language. As a result, some issues have not been discovered. For example, the full encoding of non-deterministic flows with invariant conditions require second-order quantification, and the first-order encoding requires additional assumptions. We will give the full $\lrf$-encodings and discuss such details.
\begin{notation}
Let $H$ be a hybrid automaton. We use $\unsafe = \{\unsafe_q:q\in Q\}$ as the $\lrf$-representation of an unsafe region in the state space of $H$. We can write $\llbracket \unsafe\rrbracket = \bigcup_{q\in Q} \llbracket \unsafe_q \rrbracket\times \{q\}$.
\end{notation}
First, we need to define a set of auxiliary formulas that will be important for ensuring that a particular mode is picked at a certain step.
\begin{definition}
Let $Q = \{q_1,...,q_m\}$ be a set of modes. For any $q\in Q$, and $i\in\mathbb{N}$, use  $b_{q}^i$ to represent a Boolean variable. We now define
$$\enforce_Q(q,i) = b^i_{q} \wedge \bigwedge_{p\in Q\setminus\{q\}}\neg b^{i}_{p}$$
$$\enforce_Q(q, q',i) = b^{i}_{q}\wedge \neg b^{i+1}_{q'} \wedge \bigwedge_{p\in Q\setminus\{q\}} \neg b^i_{p} \wedge \bigwedge_{p'\in Q\setminus\{q'\}} \neg b^{i+1}_{p'}$$
We omit the subscript $Q$ when the context is clear.\end{definition}
The use of the auxiliary of formulas will be explained when we define the full encodings of bounded reachability.
\paragraph{Systems with no invariants.} We start with the simplest case for hybrid automata with no invariants. Naturally, we say a hybrid automaton $H$ is {\em invariant-free} if $\inv_q(H) = \top$ for every $q\in Q(H)$. We define the following formula that checks whether an unsafe region is reachable after exactly $k$ steps of discrete transition in a hybrid system.
\begin{definition}[$k$-Step Reachability, Invariant-Free Case]
Suppose $H$ is invariant-free, and $U$ a subset of its state space represented by $\unsafe$. The $\lrf$-formula $\reach_{H,U}(k,M)$ is defined as:
\begin{eqnarray*}
& &\exists^X \vec x_{0} \exists^X\vec x_{0}^t\cdots \exists^X \vec x_{k}\exists^X\vec x_{k}^t\exists^{[0,M]}t_0\cdots \exists^{[0,M]}t_k.\\
& &\bigvee_{q\in Q} \Big(\init_{q}(\vec x_{0})\wedge \flow_{q}(\vec x_{0}, \vec x_{0}^t, t_0)\wedge \enforce(q,0)\Big)\\
\wedge & & \bigwedge_{i=0}^{k-1}\bigg( \bigvee_{q, q'\in Q} \Big(\jump_{q\rightarrow q'}(\vec x_{i}^t, \vec x_{i+1})\wedge \enforce(q,q',i)\\
& & \hspace{4.7cm}\wedge\flow_{q'}(\vec x_{i+1}, \vec x_{i+1}^t, t_{i+1})\wedge \enforce(q',i+1)\Big)\bigg)\\
\wedge & & \bigvee_{q\in Q} \unsafe_q(\vec x_{k}^t).
\end{eqnarray*}
\end{definition}
Intuitively, the trajectories start with some initial state satisfying $\init_q(\vec x_{0})$ for some $q$. In each step, it follows $\flow_q(\vec x_{i}, \vec x_{i}^t, t)$ and makes a continuous flow from $\vec x_i$ to $\vec x_i^t$ after time $t$. When $H$ makes a $\jump$ from mode $q'$ to $q$, it resets variables following $\jump_{q'\rightarrow q}(\vec x_{k}^t, \vec x_{k+1})$. The auxiliary $\enforce$ formulas ensure that picking $\jump_{q\rightarrow q'}$ in the $i$-the step enforces picking $\flow_q'$ in the $(i+1)$-th step.

\paragraph{Systems with invariants and deterministic flows.} When the invariants are not trivial, we need to ensure that for all the time points along a continuous flow, the invariant condition holds. Thus, we need to universally quantify over time. This is a fact that has been previously discussed, for instance, in~\cite{DBLP:conf/fmcad/CimattiMT12}. However, if we allow nondeterministic flows, the situation is more complicated, which has not been discovered in existing work. We give the encoding for systems with only deterministic flows first, as follows:
\begin{definition}[$k$-Step Reachability, Nontrivial Invariant and Deterministic Flow]\label{br2}
Suppose $H$ contains invariants and only deterministic flow
, and $U$ a subset of its state space represented by $\unsafe$. In this case, the $\lrf$-formula $\reach_{H,U}(k,M)$ is defined as:
\begin{eqnarray*}
& &\exists^X \vec x_{0} \exists^X\vec x_{0}^t\cdots \exists^X \vec x_{k}\exists^X\vec x_{k}^t \exists^{[0,M]}t_0\cdots \exists^{[0,M]}t_k.\\
& &\bigvee_{q\in Q} \Big(\init_{q}(\vec x_{0})\wedge \flow_{q}(\vec x_{0}, \vec x_{0}^t, t_0)\wedge \enforce(q,0)\\
& &\hspace{5cm} \wedge \forall^{[0,t_0]}t\forall^X\vec x\;(\flow_{q}(\vec x_{0}, \vec x, t)\rightarrow \inv_{q}(\vec x))\Big) \\
\wedge & &\bigwedge_{i=0}^{k-1}\bigg( \bigvee_{q, q'\in Q} \Big(\jump_{q\rightarrow q'}(\vec
x_{i}^t, \vec x_{i+1})\wedge \flow_{q'}(\vec x_{i+1}, \vec x_{i+1}^t, t_{i+1})\wedge \enforce(q,q',i)\\
& & \hspace{1.5cm}\wedge\enforce(q',i+1)\wedge \forall^{[0,t_{i+1}]}t\forall^X\vec x\;(\flow_{q'}(\vec x_{i+1}, \vec x,
t)\rightarrow \inv_{q'}(\vec x)) )\Big)\bigg)\\
\wedge & &\bigvee_{q\in Q} (\unsafe_q(\vec x_{k}^t)\wedge \enforce(q,k)).
\end{eqnarray*}
\end{definition}
The extra universal quantifier for each continuous flow expresses the requirement that for all the time points between the initial and ending time point ($t\in[0,t_i+1]$) in a flow, the continuous variables $\vec x$ must take values that satisfy the invariant conditions $\inv_q(\vec x)$.

\paragraph{Systems with invariants and nondeterministic flows.} In the most general case, a hybrid system can contain non-deterministic flow: i.e., for some $q\in Q$, there exists $\vec a_0, \vec a_t, \vec a_t'\in \mathbb{R}^n$ and $t\in\mathbb{R}$ such that $\vec a_t\neq \vec a_t'$ and $\mathbb{R}\models \flow_q(\vec a_0, \vec a_t, t)$ and $\mathbb{R}\models \flow_q(\vec a_0, \vec a_t', t)$. Consequently, there is multiple possible values for the continuous variable for each time point. Different values correspond to different trajectories, and we only look for one of the trajectories that satisfies the invariant on all time points. Thus, we need to quantify over a trajectory and write $\exists \xi \forall t.\; \inv(\xi(t))$. We conjecture that, in general, this second-order quantification can not be fully reduced to a first-order expression.

In practice, the discussion of the invariant conditions in the existing work has (implicitly) assumed that the invariant condition should hold for all possible trajectories in the case of non-deterministic flow. We can formulate this assumption in the following way:
\begin{definition}[Strictly-Imposed Invariants]
We say a hybrid automaton $H$ has strictly-imposed mode invariants, if the following condition holds. Let $\flow_q(\vec x, \vec y, t)$ and $\inv_q(\vec x)$ be the flow and invariant conditions in any mode $q$ of $H$. Let $\vec a$ be an arbitrary starting point in the mode, satisfying $\inv(\vec a)$. Then, for any  $\vec b, \vec b'\in X(H)$ such that $\flow(\vec a, \vec b, \tau)$ and $\flow(\vec a, \vec b', \tau)$ are true at the same time point $\tau\in \mathbb{R}$, we have $\inv_q(\vec b)$ iff $\inv_q (\vec b')$.
\end{definition}
If this condition is true, then a witness trajectory of bounded reachability has to require that all flows satisfy the same invariants. Consequently, we can still use the encoding in Definition~\ref{br2}, which requires that all possible flows satisfy the invariants. Thus, when this condition applies, we can still use first-order encoding for reachability in the presence of non-deterministic flows.

\subsection{$\delta$-Complete Analysis of Bounded Reachability}

We now define the $\delta$-complete analysis problem and prove its decidability.
\begin{definition}
Let $H$ be a hybrid system and $U$ a subset of its state space. Suppose $U$ is represented by the $\lrf$-formula $\unsafe$. Let $k\in \mathbb{N}$ and $M\in \mathbb{R}^+$. The $\delta$-complete analysis for $(k,M)$-bounded reachability problem asks for one of the following answers:
\begin{itemize}
\item {\sf $(k,m)$-Safety:} $H$ does not reach $\llbracket\unsafe\rrbracket$ within the $(k,M)$-bound.
\item {\sf $\delta$-Unsafety:} $H^{\delta}$ reaches $\llbracket\unsafe^{\delta}\rrbracket$ within the $(k,M)$-bound.
\end{itemize}
\end{definition}
The following lemma comes from the intuitive meaning of the encodings. A proof is given in the appendix. 
\begin{lemma}\label{equiv-delta}
Let $\delta\in\mathbb{Q}^+\cup \{0\}$ be arbitrary. Suppose $H$ is a well-defined hybrid automaton with strictly-imposed invariants. Let $U$ a subset of the state space of $H$, represented by the set  $\unsafe$ of $\lrf$-formulas. Let $\reach_{H,U}(k,M)$ be the $\lrf$-formula encoding $(k,M)$-bounded reachability of $H$ with respect to $U$.  We always have that $\mathbb{R}\models(\reach_{H,U}(k,M))^{\delta}$ iff there exists a trajectory $\xi\in\llbracket H^{\delta}\rrbracket$ such that for some $(k,t)\in T(\xi)$, where $0\leq t\leq M$, $(\xi(k, t), \sigma_{\xi}(k))\in \llbracket \unsafe^{\delta}\rrbracket$.
\end{lemma}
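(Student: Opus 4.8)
The plan is to prove the equivalence by exhibiting, in each direction, an explicit translation between a satisfying assignment of $(\reach_{H,U}(k,M))^{\delta}$ and a witnessing trajectory $\xi\in\llbracket H^{\delta}\rrbracket$. The organizing observation is that $\delta$-weakening (Definition~\ref{variants}) acts atom-by-atom on a formula already in normal form, and therefore commutes with $\wedge$, $\vee$, and the quantifiers; the Boolean $\enforce$ subformulas contain no real-arithmetic atoms and are left untouched. Hence for every component occurring \emph{positively} in the encoding --- $\init$, the principal $\flow$ constraints, $\jump$, and $\unsafe$ --- weakening the assembled formula literally replaces that component by its weakening $\init^{\delta}$, $\flow^{\delta}$, $\jump^{\delta}$, $\unsafe^{\delta}$. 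Thus the skeleton of $(\reach_{H,U}(k,M))^{\delta}$ is exactly that of a bounded-reachability encoding for $H^{\delta}$ against $\unsafe^{\delta}$, and the statement amounts to the soundness and completeness of the encoding relative to trajectory existence, which I establish directly in both directions (the case $\delta=0$, where $H^{\delta}=H$, being a special instance).

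For the ($\Leftarrow$) direction I would take a trajectory $\xi\in\llbracket H^{\delta}\rrbracket$ reaching $\llbracket\unsafe^{\delta}\rrbracket$ at step $k$ and read off the witnesses: the entry and exit points $\vec x_i,\vec x_i^{t}$ as the values of $\xi$ at the endpoints of the $i$-th interval of $T(\xi)$, the duration variables as the lengths of those intervals, and the Boolean variables $b_q^{i}$ set from the mode labeling $\sigma_{\xi}(i)$. The $\enforce$ conjuncts then hold because $\sigma_{\xi}$ selects exactly one mode per step and the jump source/target modes line up; the $\init^{\delta}$, $\flow^{\delta}$, $\jump^{\delta}$, and $\unsafe^{\delta}$ conjuncts hold verbatim from Definition~\ref{trajec} applied to $H^{\delta}$. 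The only nontrivial conjunct is the universally quantified invariant clause, and here I would invoke \emph{strictly-imposed invariants} to reduce it to the single flow realized by $\xi$, along which the weakened invariant holds by hypothesis.

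For the ($\Rightarrow$) direction I would extract the witness values from a model of $(\reach_{H,U}(k,M))^{\delta}$ and use \emph{well-definedness} of the weakened flow predicates to realize each segment as a continuous $\eta_i:[0,t_i]\to X$ joining $\vec x_i$ to $\vec x_i^{t}$ and satisfying the flow throughout; gluing the $\eta_i$ over the hybrid time domain $T_{k+1}$ and defining $\sigma_{\xi}$ from the Boolean assignment yields a candidate $\xi$. Membership $\xi\in\llbracket H^{\delta}\rrbracket$ is immediate for the $\init$, $\flow$, and $\jump$ conditions, and $(\xi(k,t),\sigma_{\xi}(k))\in\llbracket\unsafe^{\delta}\rrbracket$ is read from the final disjunct; the invariant along each $\eta_i$ should follow by instantiating the universal invariant clause at $\vec x=\eta_i(t)$.

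The step I expect to be the main obstacle is precisely this last instantiation, because $\delta$-weakening does not commute naively with the implication $\flow\rightarrow\inv$: in normal form the antecedent flow is negated, and weakening the \emph{negated} flow is not the same as negating $\flow^{\delta}$, so $\flow^{\delta}$ and $(\neg\flow)^{\delta}$ overlap in a boundary layer of width $\delta$. On that layer the weakened invariant clause can be discharged through the weakened negated antecedent \emph{without} forcing $\inv^{\delta}$, so a priori the constructed $\eta_i(t)$ need not satisfy $\inv^{\delta}$ there. Resolving this cleanly is where the hypotheses do their work: I would realize $\eta_i$ to satisfy the flow as tightly as the witnesses permit and use well-definedness together with strictly-imposed invariants to transfer the invariant from the flow-consistent points named by the universal clause to the actual points of $\eta_i$. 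Care is also required that the bounds $[0,M]$ and $[0,t_i]$ on the quantifiers behave correctly under weakening and that the $\enforce$ machinery pins down a single consistent mode sequence; these checks are routine but must be carried out.
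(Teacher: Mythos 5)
Your overall strategy is the same as the paper's: extract witnesses from a satisfying assignment and realize each continuous segment via well-definedness of the flow predicates in the forward direction, and read witnesses off a trajectory (invoking strictly-imposed invariants to discharge the universally quantified invariant clause) in the reverse direction; your ``gluing'' of segments is the paper's induction on $k$ in different clothing. Your preliminary observation that $\delta$-weakening distributes over the connectives and quantifiers, so that $(\reach_{H,U}(k,M))^{\delta}$ is structurally the reachability encoding of $H^{\delta}$ against $\unsafe^{\delta}$, is exactly the paper's (unstated) justification for arguing only about the unperturbed encoding and asserting that it ``easily applies'' to the perturbed one.

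However, the obstacle you flag in the forward direction is genuine, and your proposed resolution does not close it. After normalization, the clause $\forall t\,\forall\vec x\,(\flow_q(\vec x_i,\vec x,t)\rightarrow \inv_q(\vec x))$ becomes $\forall t\,\forall\vec x\,(\neg\flow_q\vee\inv_q)$ with the negation pushed onto atoms, and Definition~\ref{variants} then weakens the atoms of $\neg\flow_q$. For an equality-defined flow $x_t=e$, the negation is $(e-x_t>0)\vee(x_t-e>0)$, whose $\delta$-weakening $(e-x_t>-\delta)\vee(x_t-e>-\delta)$ is a tautology; hence the weakened invariant clause imposes \emph{no} constraint at all, and the constructed $\eta_i$ need not satisfy $\inv_q^{\delta}$ anywhere, while membership in $\llbracket H^{\delta}\rrbracket$ per Definition~\ref{trajec} requires it at every $(i,t)$. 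Strictly-imposed invariants and well-definedness cannot repair this: they transfer the invariant between distinct flow-consistent points at the same time instant, but here the weakened clause forces the invariant at no point from which to transfer. So the $\delta>0$, nontrivial-invariant case of the forward direction remains open in your proposal exactly at the step you isolate. You should be aware that the paper's own proof has the same hole --- it carries out the argument only for $\delta=0$ and waves at the perturbed case --- so you have not missed an idea that the paper supplies; but as written neither argument establishes the lemma for $\delta>0$ with nontrivial invariants, and an honest writeup would either restrict the claim, redefine the weakening of the invariant clause so that only $\inv_q$ is weakened, or prove a weaker conclusion in which the invariant condition on the witnessing trajectory of $H^{\delta}$ is dropped or suitably relaxed.
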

Now we can show that $\delta$-complete analysis for bounded reachability problems is decidable for general $\lrf$-representable hybrid systems.
\begin{theorem}[Decidability]
Let $\delta\in \mathbb{Q}^+$ be arbitrary. There exists an algorithm such that, for any bounded well-defined hybrid automaton $\lrf$-represented by $H$ with strictly imposed invariants, and any unsafe region $U$ $\lrf$-represented by $\unsafe$, correctly performs $\delta$-complete analysis for $(k,M)$-bounded reachability for $H$, for any $k\in \mathbb{N}, M\in \mathbb{R}^+$.
\end{theorem}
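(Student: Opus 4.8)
The plan is to reduce the $\delta$-complete analysis problem directly to the $\delta$-decision problem of Theorem~\ref{delta-decide}, using the reachability encoding as the bridge. First I would construct the $\lrf$-formula $\reach_{H,U}(k,M)$ that encodes $(k,M)$-bounded reachability of $H$ with respect to $U$. Since $H$ has strictly-imposed invariants, the first-order encoding in Definition~\ref{br2} is applicable even when the flows are nondeterministic, so this single formula faithfully captures the reachability of $H$; this correspondence is exactly the content of Lemma~\ref{equiv-delta} specialized to $\delta = 0$.

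The second step is to verify that $\reach_{H,U}(k,M)$ is a \emph{bounded} $\lrf$-sentence, so that Theorem~\ref{delta-decide} applies. Here I would inspect each quantifier in the prefix: the existential quantifiers over the continuous state variables $\vec x_i$ and $\vec x_i^t$ range over the bounded state space $X$ (by the boundedness hypothesis on $H$), the time variables $t_i$ are bounded by $[0,M]$, and the universal quantifiers over $t$ and $\vec x$ in the invariant clauses are bounded by $[0,t_i]$ and $X$ respectively. All the terms occurring inside $\flow$, $\jump$, $\inv$, $\init$, and $\unsafe$ are built by composition from functions in $\mathcal{F}$, hence are Type~2 computable. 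Therefore $\reach_{H,U}(k,M)$ is a bounded $\lrf$-sentence of the form required by Theorem~\ref{delta-decide}.

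With this established, I would invoke Theorem~\ref{delta-decide} on the pair $(\reach_{H,U}(k,M), \delta)$: the guaranteed algorithm returns either $\mathsf{False}$, meaning $\reach_{H,U}(k,M)$ is false, or $\delta$-$\mathsf{True}$, meaning $(\reach_{H,U}(k,M))^{\delta}$ is true. The final step translates these two logical verdicts into the two outcomes demanded by $\delta$-complete analysis via Lemma~\ref{equiv-delta}. If the verdict is $\mathsf{False}$, then by the $\delta = 0$ instance of Lemma~\ref{equiv-delta} no trajectory of $H$ reaches $\llbracket\unsafe\rrbracket$ within the $(k,M)$-bound, so the algorithm reports $(k,M)$-Safety. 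If the verdict is $\delta$-$\mathsf{True}$, then by Lemma~\ref{equiv-delta} there exists $\xi\in\llbracket H^{\delta}\rrbracket$ reaching $\llbracket\unsafe^{\delta}\rrbracket$ within the $(k,M)$-bound, so the algorithm reports $\delta$-Unsafety. The overlap clause of Theorem~\ref{delta-decide} ensures that whenever both verdicts are logically admissible either output is sound, which aligns precisely with the permissiveness built into the definition of $\delta$-complete analysis.

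I expect the main obstacle to be the second step rather than the logical reduction itself. Confirming boundedness of every quantifier rests on the hypothesis that $X$ is bounded and on the fact that the quantifier bounds $t_i \le M$ and $t \le t_i$ are themselves admissible $\lrf$-terms over the free variables; one must also check that passing to the $\delta$-weakening preserves this structure, which it does, since $\delta$-weakening only relaxes the atomic comparisons in $\psi$ and leaves the bounded quantifier prefix and its bounds untouched. Once boundedness and Type~2 computability of the terms are in hand, the result follows by composing Theorem~\ref{delta-decide} with Lemma~\ref{equiv-delta}.
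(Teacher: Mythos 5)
Your overall strategy---encode reachability as a bounded $\lrf$-sentence, hand it to the $\delta$-decision procedure of Theorem~\ref{delta-decide}, and translate the two possible verdicts back through Lemma~\ref{equiv-delta}---is exactly the paper's, and your second step (explicitly checking that every quantifier in the encoding is bounded, that the bounds such as $t\leq t_i$ are admissible terms in the free variables, that the terms are Type~2 computable, and that $\delta$-weakening leaves the quantifier prefix intact) is spelled out more carefully than in the paper itself. The gap is in the first step: you run the decision procedure on the single sentence $\reach_{H,U}(k,M)$. By its construction (and by the statement of Lemma~\ref{equiv-delta}, where the unsafe state is required to occur at index $k$ of the hybrid time domain), $\reach_{H,U}(k,M)$ encodes reachability of $U$ after \emph{exactly} $k$ discrete jumps, whereas the $(k,M)$-bounded reachability problem and the $(k,M)$-Safety verdict quantify over all depths $i\leq k$. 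A system that reaches $\llbracket\unsafe\rrbracket$ after, say, $2<k$ jumps but admits no unsafe trajectory with exactly $k$ jumps (for instance because no further jumps are enabled from the unsafe region, or because every continuation exits it) would make $\reach_{H,U}(k,M)$ false, and your algorithm would then wrongly report $(k,M)$-Safety.

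The repair is exactly what the paper does: invoke the $\delta$-decision procedure on $\reach_{H,U}(i,M)$ for every $i$ with $0\leq i\leq k$; report $(k,M)$-Safety only if all $k+1$ sentences come back $\mathsf{False}$ (applying Lemma~\ref{equiv-delta} with $\delta=0$ at each depth), and report $\delta$-Unsafety as soon as some $\reach_{H,U}(i,M)$ is $\delta$-$\mathsf{True}$ (applying Lemma~\ref{equiv-delta} with the given $\delta$ at that depth). With that change the remainder of your argument, including the observation that the overlap clause of Theorem~\ref{delta-decide} matches the permissiveness of the $\delta$-complete analysis specification, goes through unchanged.
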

\begin{proof}
We need to show that there is an algorithm that correctly returns one of the following:
\begin{itemize}
\item $H$ does not reach $\llbracket\unsafe\rrbracket$ within the $(k,M)$-bound.
\item $H^{\delta}$ reaches $\llbracket\unsafe^{\delta}\rrbracket$ within the $(k,M)$-bound.
\end{itemize}
To do this, we only need to solve the $\delta$-decision problem of $\reach_{H,U}(i,M)$ for $0\leq i\leq k$. We obtain either $\reach_{H,U}(i,M)$ is false for all such $i$, or is $\delta$-true for some $i$, then:
\begin{itemize}
\item Suppose $\reach_{H,U}(i,M)$ is false for all $i$. Then we know that for any $i\leq k$, $\reach_{H,U}(i,M)$ is false. Using Lemma~\ref{equiv-delta} for the special case $\delta=0$, we know that there does not exist a trajectory $\xi\in\llbracket H\rrbracket$ that can reach $U$ within $i$ steps, and consequently the system is safe within the $(k,M)$-bound.
\item Suppose $\reach_{H,U}(i,M)$ is $\delta$-true for some $i$. We know that there exists $i\leq k$ such that $\reach^{\delta}_{H,U}(i,M)$ is true. Using Lemma~\ref{equiv-delta} for $\delta\in\mathbb{Q}^+$, we know that there exists a trajectory $\xi\in\llbracket H^{\delta}\rrbracket$ that can reach the region represented by $\unsafe^{\delta}$ in $i$-steps, i.e., within the $(k,M)$-bound.\qed
\end{itemize}
\end{proof}
From the structures of the $\lrf$-formulas encoding $\delta$-reachability, we can obtain the following complexity results of the reachability problems.
\begin{theorem}[Complexity]
Suppose all the $\lrf$-terms in the description of $H$ and $U$ are in complexity class $\mathsf{C}$. Then deciding the $(k,M)$-bounded $\delta$-reachability problem is in
\begin{itemize}
\item $\np^{\mathsf{C}}$ for an invariant-free $H$;
\item $(\Sigma_2^P)^{\mathsf{C}}$ for an $H$ with strictly-imposed nontrivial invariants.
\end{itemize}
\end{theorem}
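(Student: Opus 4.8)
The plan is to read the complexity off the quantifier structure of the reachability encodings and then invoke Theorem~\ref{compmain}. First I would reduce $(k,M)$-bounded $\delta$-reachability to a single $\delta$-decision problem. By the argument behind the Decidability theorem together with Lemma~\ref{equiv-delta}, distinguishing $(k,M)$-safety of $H$ from $\delta$-unsafety of $H^{\delta}$ is exactly the $\delta$-decision problem for
$$\Phi_{k,M} \;=\; \bigvee_{i=0}^{k}\reach_{H,U}(i,M),$$
because reachability within the bound is reachability in exactly $i$ steps for some $i\le k$. Hence it suffices to bound the $\delta$-decision complexity of $\Phi_{k,M}$.

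Next I would control the term complexity and the size of $\Phi_{k,M}$. Every real term appearing in $\reach_{H,U}(i,M)$ comes from $\init_q,\flow_q,\jump_{q\to q'},\unsafe_q$, hence from the description of $H$ or $U$, so it lies in $\mathsf{C}$; the $\enforce$ formulas are purely propositional, and Boolean combination does not change the term complexity. The $\bigwedge_{i}\bigvee_{q,q'}$ skeleton produces only $O(k\,|Q|^2)$ copies of these subformulas and $O(k\,n)$ quantified real variables, so $\Phi_{k,M}$ has size polynomial in the input with all terms still in $\mathsf{C}$, and I can apply Theorem~\ref{compmain} to the class $S=\{\Phi_{k,M}\}$ once I know the alternation depth.

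The alternation count is the heart of the argument. In the invariant-free case every quantifier of $\reach_{H,U}(i,M)$ is existential (the bounded blocks $\exists^X\vec x_i$, $\exists^X\vec x_i^t$, $\exists^{[0,M]}t_i$, and the Boolean variables $b_q^i$, which sit in the outermost existential layer), so $\Phi_{k,M}$ is a bounded $\Sigma_1$-sentence and Theorem~\ref{compmain} gives $(\Sigma_1^P)^{\mathsf{C}}=\np^{\mathsf{C}}$. For strictly-imposed nontrivial invariants (Definition~\ref{br2}) the only extra quantifiers are the inner universal blocks $\forall^{[0,t_i]}t\,\forall^X\vec x$ that enforce $\flow\rightarrow\inv$; each such block is a $\Pi_1$ subformula, and the surrounding structure combines them with quantifier-free parts only through $\wedge$ and $\vee$. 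Using the closure of $\Pi_1$ formulas under finite conjunction and disjunction (after renaming the bound $t,\vec x$ apart), the whole matrix under the leading existentials is $\Pi_1$, so $\Phi_{k,M}$ is a bounded $\Sigma_2$-sentence and Theorem~\ref{compmain} yields $(\Sigma_2^P)^{\mathsf{C}}$.

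I expect the delicate point to be exactly this prenexing step rather than any deep estimate: one must verify that collecting the per-step universal invariant blocks into a single universal block does not create a spurious third alternation, which is precisely what the $\Pi_1$-closure fact guarantees, and that the outer disjunction over $i\le k$ neither raises the alternation level nor inflates the size beyond a polynomial, which follows from the analogous closure of $\Sigma_1$ and $\Sigma_2$ under polynomially many disjunctions with variables renamed apart. Confirming that the Boolean variables can be kept in the existential layer (directly, or encoded by $0\le b\le 1\wedge b(1-b)=0$) completes the bookkeeping.
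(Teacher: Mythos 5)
Your proposal is correct and is exactly the argument the paper intends: the paper offers no written proof beyond the remark that the bounds follow ``from the structures of the $\lrf$-formulas encoding $\delta$-reachability,'' and your fleshed-out version---reduce to the $\delta$-decision problem for the disjunction of the step-$i$ encodings, check that all terms stay in $\mathsf{C}$ and the size stays polynomial, and read off $\Sigma_1$ versus $\Sigma_2$ alternation before invoking Theorem~\ref{compmain}---is precisely that argument made explicit. Your attention to the prenexing of the per-step $\Pi_1$ invariant blocks and to placing the Boolean selector variables in the existential layer supplies bookkeeping the paper omits.
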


\begin{corollary}
For linear and polynomial hybrid automata, $\delta$-complete bounded reachability analysis ranges from being $\mathsf{NP}$-complete to $\mathsf{\Sigma_2^P}$-complete for the three cases. For hybrid automata that can be $\lrf$-represented with whose $\mathcal{F}$ contains the set of ODEs defined $\mathsf{P}$-computable right-hand side functions, the problem is $\mathsf{PSPACE}$-complete.
\end{corollary}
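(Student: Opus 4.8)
The plan is to split the statement into a membership (upper bound) half and a hardness (lower bound) half, obtaining the membership half almost for free from the preceding Complexity Theorem and obtaining completeness through three explicit reductions.

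For the upper bounds, the only input the Complexity Theorem needs is the Type-2 complexity class $\mathsf{C}$ of the terms appearing in $\reach_{H,U}(k,M)$. For a linear or polynomial automaton these terms are built from $+$, $\times$ and rational constants, hence evaluable in polynomial time, i.e.\ $\mathsf{C}=\p$. Plugging this in gives membership in $\np^{\p}=\np$ for the invariant-free encoding and in $(\Sigma_2^P)^{\p}=\mathsf{\Sigma_2^P}$ for the two encodings carrying the universally quantified invariant check of Definition~\ref{br2}; this already yields the ``ranges from $\np$ to $\mathsf{\Sigma_2^P}$'' statement across the three cases. For the ODE case I would invoke the known fact that the solution operator of a Lipschitz ODE with a polynomial-time computable right-hand side is PSPACE-computable (Ko; Kawamura), so that the flow terms lie in $\mathsf{C}=\mathsf{PSPACE}$. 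The oracle identities $\np^{\mathsf{PSPACE}}=(\Sigma_2^P)^{\mathsf{PSPACE}}=\mathsf{PSPACE}$ --- the polynomial hierarchy collapses once a PSPACE oracle is available --- then collapse both encodings to PSPACE membership.

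For the lower bounds I would give three reductions, in each case arranging the constructed instance to be robust so that its $\delta$-decision answer agrees with the exact answer; this robustness is exactly what lets hardness of the ordinary problem transfer to the $\delta$-problem. NP-hardness of the invariant-free case follows by encoding a SAT instance into the purely existential matrix of $\reach_{H,U}$, forcing each continuous variable near $0$ or $1$ with linear guards. For $\mathsf{\Sigma_2^P}$-hardness I would reduce from $\exists\forall$-QBF: the outer existential block is carried by the existentially quantified jump and flow variables, while the inner universal block is realized by the universally quantified invariant clause $\forall^{[0,t]}t\,\forall^X\vec x\,(\flow_q(\cdots)\rightarrow\inv_q(\vec x))$, which is precisely the alternation pushing the problem from $\np$ to $\mathsf{\Sigma_2^P}$. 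PSPACE-hardness of the ODE case would be obtained by embedding the configuration evolution of a polynomial-space Turing machine into a continuous flow with a polynomial-time computable right-hand side, again appealing to the PSPACE-hardness of ODE solving.

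The main obstacle will be the two harder reductions. In the $\mathsf{\Sigma_2^P}$ reduction the delicate point is to make the invariant clause faithfully simulate the universal Boolean quantifier while keeping the flow deterministic (or strictly-imposed) so that Definition~\ref{br2} applies and the second-order subtlety flagged earlier in the paper does not intrude. In the PSPACE reduction the work lies in encoding a space-bounded computation continuously without smuggling more than polynomial-time computability into the right-hand side, and in ensuring the instance is $\delta$-robust so that the collapse of the quantifier alternations under a PSPACE oracle is matched by genuine PSPACE-hardness on the object side. The remaining completeness claims for each of the three cases are then routine bookkeeping over these three reductions.
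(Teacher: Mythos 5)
Your proposal is correct and, for the membership half, is exactly the paper's argument made explicit: the paper's entire justification for this corollary is the single sentence ``The results come from the fact that the complexity of polynomials is in $\mathsf{P}$, and the set of ODEs in question are $\mathsf{PSPACE}$-complete,'' i.e.\ instantiate $\mathsf{C}=\p$ (resp.\ $\mathsf{C}=\mathsf{PSPACE}$, via Kawamura's result on Lipschitz ODEs with polynomial-time computable right-hand sides) in the preceding Complexity Theorem and use the oracle collapses $\np^{\p}=\np$, $(\Sigma_2^P)^{\p}=\Sigma_2^P$, and $(\Sigma_2^P)^{\mathsf{PSPACE}}=\mathsf{PSPACE}$. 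Where you go beyond the paper is the hardness half: the paper supplies no reductions at all, yet the corollary asserts completeness, so your three reductions (SAT into the existential matrix for $\np$, $\exists\forall$-QBF using the universally quantified invariant clause for $\Sigma_2^P$, and the $\mathsf{PSPACE}$-hardness of ODE solving for the last case) are genuinely needed content rather than an alternative route. You are also right to flag the robustness requirement --- hardness of the exact reachability problem does not automatically transfer to the $\delta$-decision version, since a $\delta$-complete procedure is permitted to answer either way on non-robust instances; arranging the reduced instances so that the $\delta$-answer and the exact answer coincide for some fixed $\delta$ is the one nontrivial obligation, and it is the step the paper silently assumes. In short: same approach, but your write-up supplies the lower-bound arguments the paper omits, which is what the ``complete'' in the statement actually requires.
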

The results come from the fact that the complexity of polynomials is in $\mathsf{P}$, and the set of ODEs in questions are $\mathsf{PSPACE}$-complete.
\begin{remark}
The complexity results indicate that the worst-case running time of the analysis is exponential in all the input parameters. In particular, the worst-case running time grows exponentially with the $\delta$ and the size of the domains. We need to use efficient decision procedures to manage this complexity.
\end{remark}

\section{Experiments}

Our tool {\sf dReach} implements the techniques presented in the
paper. The tool is built on several existing packages,including {\sf
  opensmt}~\cite{DBLP:conf/tacas/BruttomessoPST10} for the general
DPLL(T) framework, {\sf
  realpaver}~\cite{DBLP:journals/toms/GranvilliersB06} for ICP, and
{\sf CAPD}~\cite{capd} for computing interval-enclosures of ODEs. The
tool is open-source at~\url{http://dreal.cs.cmu.edu/dreach.html}. All
benchmarks and data shown here are also available on the tool
website.All experiments were conducted on a machine with a 3.4GHz
octa-core Intel Core i7-2600 processor and 16GB RAM, running 64-bit
Ubuntu 12.04LTS. Table~\ref{tbl:exp} is a summary of the running time
of the tool on various hybrid system models which we explain below.

\paragraph{Atrial Fibrillation.} We studied the Atrial Fibrillation model as developed in~\cite{DBLP:conf/cav/GrosuBFGGSB11}. The model has four discrete control locations, four state variables, and nonlinear ODEs. A typical set of ODEs in the model is:
\begin{eqnarray*}
\frac{du}{dt} &=& e + (u-\theta_v)(u_u-u ) v g_{fi} + wsg_{si}-g_{so}(u)\\
\frac{ds}{dt} &=& \displaystyle\frac{g_{s2}}{(1+\exp(-2k(u-us)))} -  g_{s2}s\\
\frac{dv}{dt} &=& -g_v^+\cdot v \hspace{1cm} \frac{dw}{dt} = -g_w^+\cdot w
\end{eqnarray*}
The exponential term on the right-hand side of the ODE is the sigmoid function, which often appears in modelling biological switches.
\paragraph{Prostate Cancer Treatment.} The Prostate Cancer Treatment model~\cite{bing} exhibits more nonlinear ODEs. The reachability questions are
\begin{eqnarray*}
\frac{dx}{dt} &=& (\alpha_x
(k_1+(1-k_1)\frac{z}{z+k_2}-\beta_x( (1-k_3)\frac{z}{z+k_4}+k_3)) - m_1(1-\frac{z}{z_0}))x + c_1 x\\
\frac{dy}{dt} &=& m_1(1-\frac{z}{z_0})x+(\alpha_y (1- d\frac{z}{z_0}) - \beta_y)y+c_2y\\
\frac{dz}{dt} &=& \frac{-z}{\tau} + c_3z\\
\frac{dv}{dt} &=& (\alpha_x
(k_1+(1-k_1)\frac{z}{z+k_2}-\beta_x(k_3+(1-k_3)\frac{z}{z+k_4}))\\
& &- m_1(1-\frac{z}{z_0}))x + c_1 x + m_1(1-\frac{z}{z_0})x+(\alpha_y (1- d\frac{z}{z_0}) - \beta_y)y+c_2y
\end{eqnarray*}
\paragraph{Electronic Oscillator.} The EO model represents an electronic oscillator model that contains nonlinear ODEs such as the following:
\begin{eqnarray*}
\frac{dx}{dt} &=& - ax \cdot sin(\omega_1 \cdot \tau)\\
\frac{dy}{dt} &=& - ay \cdot sin( (\omega_1 + c_1) \cdot \tau) \cdot sin(\omega_2)\cdot 2\\
\frac{dz}{dt} &=& - az \cdot sin( (\omega_2 + c_2) \cdot \tau) \cdot cos(\omega_1)\cdot 2\\
\frac{\omega_1}{dt} &=& - c_3\cdot \omega_1\ \ \ \frac{\omega_2}{dt} = -c_4\cdot\omega_2\ \ \ \frac{d\tau}{dt} = 1
\end{eqnarray*}
\paragraph{Quadcopter Control.} We developed a model that contains the full dynamics of a quadcopter. We use the model to solve control problems by answering reachability questions. A typical set of the differential equations are the following:
\begin{eqnarray*}
\frac{\mathrm{d}\omega_x}{\mathrm{d}t} &=& L\cdot k\cdot (\omega_1^2 - \omega_3^2)(1/I_{xx})-(I_{yy} - I_{zz})\omega_y\omega_z/I_{xx}\\
\frac{\mathrm{d}\omega_y}{\mathrm{d}t} &=& L\cdot k\cdot(\omega_2^2 - \omega_4^2)(1/I_{yy})-(I_{zz} - I_{xx})\omega_x\omega_z/I_{yy}\\
\frac{\mathrm{d}\omega_z}{\mathrm{d}t} &=& b\cdot(\omega_1^2 - \omega_2^2 + \omega_3^2 - \omega_4^2)(1/I_{zz})-(I_{xx} - I_{yy})\omega_x\omega_y/I_{zz}\\
\frac{\mathrm{d}\phi}{\mathrm{d}t} &=& \omega_x + \displaystyle{\frac{\sin\left(\phi\right) \sin\left(\theta\right)}{{\left(\frac{\sin\left(\phi\right)^{2} \cos\left(\theta\right)}{\cos\left(\phi\right)} + \cos\left(\phi\right) \cos\left(\theta\right)\right)} \cos\left(\phi\right)}}\omega_y + \displaystyle\frac{\sin\left(\theta\right)}{\frac{\sin\left(\phi\right)^{2} \cos\left(\theta\right)}{\cos\left(\phi\right)} + \cos\left(\phi\right) \cos\left(\theta\right)}\omega_z\\
\frac{\mathrm{d}\theta}{\mathrm{d}t} &=& -(\displaystyle\frac{\sin\left(\phi\right)^{2} \cos\left(\theta\right)}{{\left(\frac{\sin\left(\phi\right)^{2} \cos\left(\theta\right)}{\cos\left(\phi\right)}\omega_y + \cos\left(\phi\right) \cos\left(\theta\right)\right)} \cos\left(\phi\right)^{2}} + \frac{1}{\cos\left(\phi\right)})\omega_y\\
& &\hspace{5cm}-\displaystyle\frac{\sin\left(\phi\right) \cos\left(\theta\right)}{{\left(\frac{\sin\left(\phi\right)^{2} \cos\left(\theta\right)}{\cos\left(\phi\right)} + \cos\left(\phi\right) \cos\left(\theta\right)\right)} \cos\left(\phi\right)}\omega_z \\
\frac{\mathrm{d}\psi}{\mathrm{d}t} &=& \displaystyle\frac{\sin\left(\phi\right)}{{\left(\frac{\sin\left(\phi\right)^{2} \cos\left(\theta\right)}{\cos\left(\phi\right)} + \cos\left(\phi\right) \cos\left(\theta\right)\right)} \cos\left(\phi\right)}\omega_y + \displaystyle\frac{1}{\frac{\sin\left(\phi\right)^{2} \cos\left(\theta\right)}{\cos\left(\phi\right)} + \cos\left(\phi\right) \cos\left(\theta\right)}\omega_z\\
\frac{\mathrm{d}{xp}}{\mathrm{d}t} &=& (1/m)(\sin(\theta)\sin(\psi)k(\omega_1^2 + \omega_2^2 +\omega_3^2+\omega_4^2) - k\cdot d\cdot{xp})\\
\frac{\mathrm{d}{yp}}{\mathrm{d}t} &=& (1/m)(-\cos(\psi)\sin(\theta)k(\omega_1^2 + \omega_2^2 +\omega_3^2+\omega_4^2) - k\cdot d\cdot{yp})\\
\frac{\mathrm{d}{zp}}{\mathrm{d}t} &=& (1/m)(-g-\cos(\theta)k(\omega_1^2 + \omega_2^2 +\omega_3^2+\omega_4^2) - k\cdot d\cdot{zp}\\
\frac{\mathrm{d}x}{\mathrm{d}t} &=& {xp}, \frac{\mathrm{d}y}{\mathrm{d}t} = {yp}, \frac{\mathrm{d}z}{\mathrm{d}t} = {zp}
\end{eqnarray*}

\newcommand{\hmodel}[2]{\href{http://dreal.cs.cmu.edu/#1}{#2}}
{\small
\begin{table}[!th]
  \centering
  \small
  \begin{tabular}{l|r|r|r|r|r|r|r|r}
    \hline
    \hline
    Benchmark    & \#Mode& \#Depth & \#ODEs & \#Vars  & Delta  & Result       & Time(s) & Trace \\
    \hline
    \hline
      AF-GOOD & 4     & 3        & 20     & 53      & 0.001     & SAT &  0.425    & 793K     \\
       AF-BAD & 4     & 3        & 20     & 53      & 0.001     & UNSAT &  0.074    & ---      \\
  AF-TO1-GOOD & 4     & 3        & 24     & 62      & 0.001     & SAT &  2.750    & 224K     \\
   AF-TO1-BAD & 4     & 3        & 24     & 62      & 0.001     & UNSAT &  5.189    & ---     \\
  AF-TO2-GOOD & 4     & 3        & 24     & 62      & 0.005     & SAT &  3.876    & 553K     \\
   AF-TO2-BAD & 4     & 3        & 24     & 62      & 0.001     & UNSAT &  8.857    & ---     \\
 AF-TSO1-TSO2 & 4     & 3        & 24     & 62      & 0.001     & UNSAT &  0.027    & ---     \\
       AF8-K7 & 8     & 7        & 40     & 101     & 0.001     & SAT & 10.478   & 3.8M      \\
      AF8-K23 & 8     & 23       & 40     & 293     & 0.001     & SAT & 135.29   & 11M      \\
    \hline
    \hline
    EO-K2  & 3     & 2        & 18     & 48      & 0.01    & SAT & 3.144    & 1.9M      \\
    EO-K11 & 3     & 11       & 99     & 174     & 0.01    & UNSAT & 0.969    & ---       \\
    \hline
    \hline
    QUAD-K1  & 2   & 1          & 34     & 89      & 0.01      & SAT & 2.386 &  10M \\
    QUAD-K2  & 2   & 2          & 34     & 125     & 0.01      & SAT & 4.971 &  13M \\
    QUAD-K3  & 4   & 3          & 68     & 161     & 0.01      & SAT & 13.755 & 42M \\
    QUAD-K3U & 4   & 3          & 68     & 161     & 0.01      & UNSAT & 2.846 & --- \\
    \hline
    \hline
    CT       & 2   & 2         & 10      & 41      & 0.005     & SAT & 345.84 & 3.1M\\
    CT       & 2   & 2         & 10      & 41      & 0.002     & SAT & 362.84 & 3.1M\\
    \hline
    \hline
    BB-K10 & 2     & 10       & 22     & 66      & 0.01        & SAT & 8.057     & 123K  \\
    BB-K20 & 2     & 20       & 42     & 126     & 0.01        & SAT & 39.196    & 171K  \\
    \hline
    \hline
  \end{tabular}
  \caption{\small
    \#Mode = Number of modes in the hybrid system,
    \#Depth = Unrolling depth,
    \#ODEs = Number of ODEs in the unrolled formula,
    \#Vars = Number of variables in the unrolled formula,
    Result = Bounded Model Checking Result (delta-SAT/UNSAT)
    Time = CPU time (s),
    Trace = Size of the ODE trajectory,
    AF = Atrial Filbrillation,
    EO = Electronic Oscillator,
    QUAD = Quadcopter Control,
    CT = Cancer Treatment,
    BB = Bouncing Ball with Drag.
}\label{tbl:exp}
\end{table}
}

\paragraph{Room for Improvements.} We aim to provide an open-source
framework that allows much more optimization. In particular, while we
can solve highly nonlinear models that are beyond the scope of other
existing tools, there are simpler examples that other tools perform
better. For instance, the Flow* tool~\cite{DBLP:conf/rtss/ChenAS12}
can efficiently compute a tight enclosure of the following system,
while our tool does not terminate in reasonable time:
\begin{eqnarray*}
dx/dt &=& -9(x - 2) - 7(y + 2) + (z - 1) + 0.2(x - 2)(y + 2) \\
& &\hspace{2cm}+0.1(y + 2)(z - 1) + 0.1(x - 2)(z - 1) + 0.5(z - 1)^2\\
dy/dt &=& 6(x - 2) + 4(y + 2) + z - 1\\
dz/dt &=& 3(x - 2) + 2(y + 2) - 2.5(z - 1)
\end{eqnarray*}
The reason is that the CAPD package that we use for verified integration of ODE blows up on this set of equations. However, our framework can integrate any reachable set computation tool, in replace of CAPD, for computing pruning on continuous flows. We remark on this in the next section.
\begin{figure}[h!]
\begin{center}
\includegraphics[width=1.0\textwidth, height=0.6\textwidth]{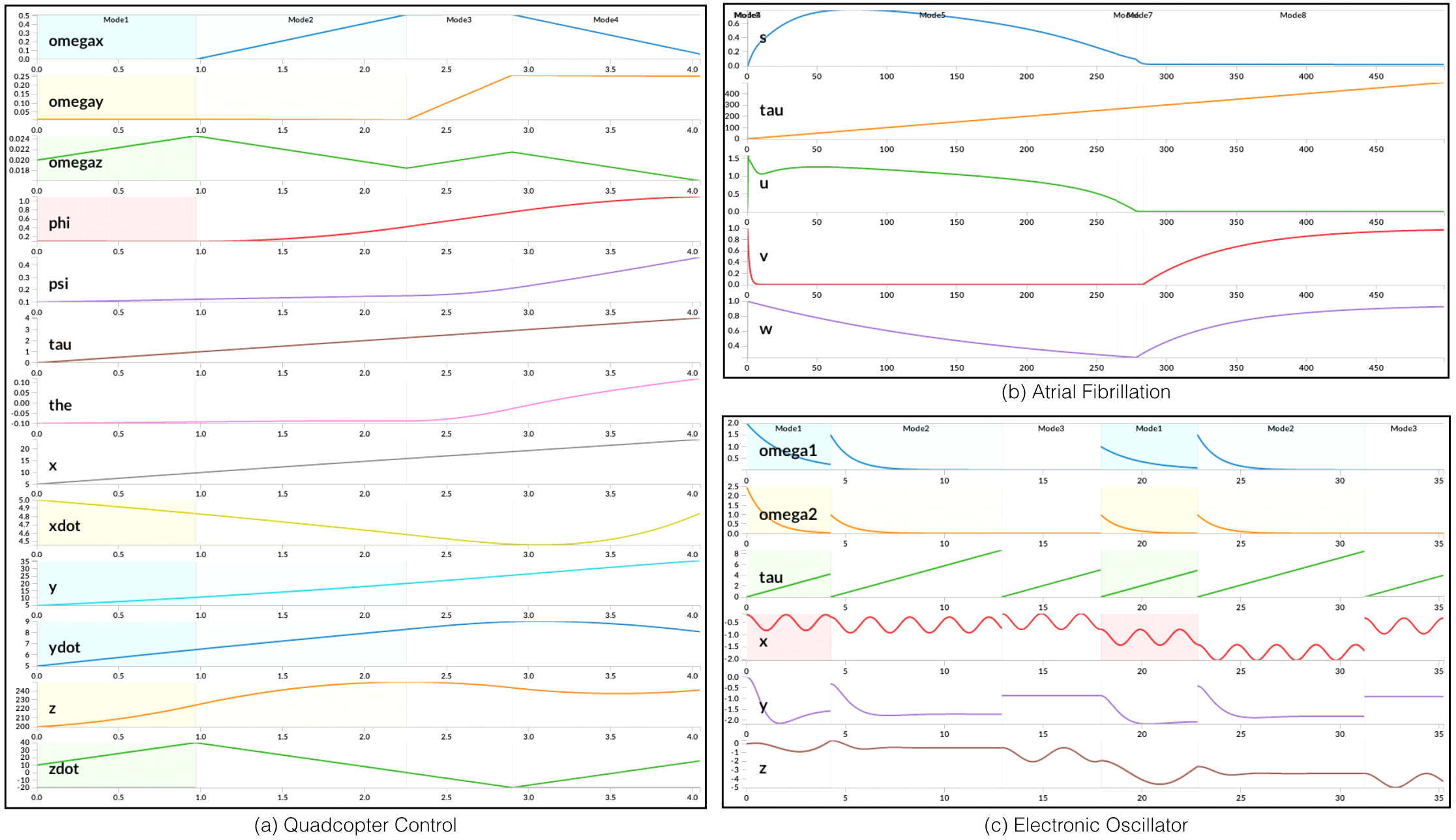}
\caption{Example trajectories computed for the following models: (a) Quadcopter Control, (b) Atrial Fibrillation, (c) Electronic Oscillator.}\label{fig:exp_figure}
\end{center}
\end{figure}
\section{Discussion}

Reachable set computation, which computes geometric representations of the complete set of reachable states, is the mainstream approach for analyzing bounded reachability of hybrid systems. The techniques can have difficulty in scaling on systems with very complex dynamics and discrete transitions. Bounded model checking has the advantage of focusing the search for one counterexample, and does not maintain the complete set of reachable states. With fast SAT/SMT solvers, bounded model checking techniques can natively handle the discrete components in hybrid systems. Bounded model checking requires a very powerful solver, one that can handle ODEs and nested quantifiers. We have proved that the complexity of bounded $\delta$-reachability is comparable to SAT solving, and it is reasonable to expect that with more improvement on the solver, large realistic systems can eventually be handled in practice. Note again that all the techniques in reachable set computation can be directly used in logic solvers, and it is possible to have practical tools that combine the advantages of both approaches.
\section{Conclusion}
We developed the framework of $\delta$-complete analysis for bounded reachability of a wide range of hybrid systems. $\delta$-Complete reachability analysis reduces verification problems to $\delta$-decision problems of formulas over the reals. It follows from $\delta$-decidability of these formulas that $\delta$-complete reachability analysis of a wide range of nonlinear hybrid systems is decidable. In practice, $\delta$-reachability problems are solved through reduction to $\delta$-decision problems for first-order formulas over the reals. We demonstrated the scalability of our approach on highly nonlinear hybrid systems.
\bibliographystyle{abbrv}
\bibliography{tau}
\newpage
\section*{Appendix}
\paragraph{Proof of Lemma~\ref{equiv-delta}. }
\begin{proof}
We prove for the case with nontrivial invariants. We work with the unperturbed encoding, which easily applies to the $\delta$-perturbed version. We will need to do induction on the subformula of $\reach_{H,U}$ that does not contain the unsafe conditions. For reasons that will be made clear below, we split the formula $\reach_{H,U(k,M)}$ into two parts and write it as the conjunction $\traj(k,M)\wedge \unsafe(k)$, where $\unsafe(k)$ is $\bigvee_{q\in Q}(\unsafe_q(\vec x_k^t)\wedge \enforce(q,k))$.

Suppose $\mathbb{R}\models \reach_{H,U}(k,M)$. We do induction on $k$ to prove that there exists a trajectory $\xi\in\llbracket H\rrbracket$ that contains $k$ mode changes. When $k=0$, without loss of generality we pick an arbitrary starting mode $q$, such that the $\traj(k,M)$ part of the formula can be simplified as
\begin{eqnarray*}
& &\exists^X \vec x_{0} \exists^X\vec x_{0}^t\exists^{[0,M]}t_0\;\Big(\init_{q}(\vec x_{0})\wedge \flow_{q}(\vec x_{0}, \vec x_{0}^t, t_0)\wedge \enforce(q,0)\\
& &\hspace{5cm}\wedge \forall^{[0,t_0]}t\forall^X\vec x\;(\flow_{q}(\vec x_{0}, \vec x, t)\rightarrow \inv_{q}(\vec x))).
\end{eqnarray*}
Since the formula is true, there exists witnesses $\vec a, \vec a^t, \tau$ such that the quantifier-free part is satisfied. By well-definedness of $\flow_q$ there exists a trajectory $\xi$ from $\vec a_0$ to $\vec a^t$ such that for any $0\leq \tau'\leq \tau$, $\xi(\tau)$ satisfies the invariant condition. Now, suppose $k= (k-1)+1$ ($k\geq 1$) and by inductive hypothesis there exists a trajectory $\xi'\in\llbracket H\rrbracket$ with $k-1$ mode changes. We now extend $\xi'$ with one more mode change. Let $\traj(k-1,M)$ be the part of $\reach_{H,U})(k-1,M)$, and thus $\traj(k,M)$ can be written as
\begin{eqnarray*}
& &\exists \vec x_{k}\exists^X\vec x_{k}^t \exists^{[0,M]}t_k\;\\
& &\Bigg(\traj(k-1,M)\wedge \bigvee_{q, q'\in Q} \Big(\jump_{q\rightarrow q'}(\vec x_{k-1}^t, \vec x_{k})\wedge \flow_{q'}(\vec x_{k}, \vec x_{k}^t, t_{k})\\
& &\wedge \enforce(q,q',i)\wedge \forall^{[0,t_{k}]}t\forall^X\vec x\;(\flow_{q'}(\vec x_{k}, \vec x,t)\rightarrow \inv_{q'}(\vec x)) )\wedge\enforce(q',k)\Big)\Bigg)
\end{eqnarray*}
Note that $\vec x_0,...,\vec x_{k-1}^t$ are quantified variables in $\traj(k-1,M)$. Since the formula is true, there exists $\vec a_{k}, \vec a_k^t, \tau_k$ that witness the satisfiability of the quantifier-free part of the formula outside of $\traj(k-1,M)$. Now, we extend $\xi'\in\llbracket H\rrbracket$ in the following way. Let the last state of $\xi'$ be given by $\vec a^t_{k-1}$. Following the formula, we have that $\jump_{1\rightarrow q'}(\vec a_{k-1}^t, \vec a_k)$ satisfies the jumping condition between mode $q$ and $q'$. It is then followed by a continuous trajectory that starts from $\vec a_k$ and ends at $\vec a_k^t$, satisfying $\flow(\vec a_k, \vec a_k^t, \tau_k)$. Thus, there exists a trajectory $\xi\in \llbracket H\rrbracket$ with $k$ mode changes. Thus, for all $k$ there exists a trajectory $\xi\in \llbracket H\rrbracket$ such that for some $(k,t)\in T(\xi)$, $\xi(k,t),\sigma_{\xi}(k)\in \llbracket \unsafe\rrbracket$.

The reverse direction is easy. Suppose there exists a trajectory $\xi\in \llbracket H\rrbracket$ such that for some $(k,t)\in T(\xi)$, $\xi(k,t),\sigma_{\xi}(k)\in \llbracket \unsafe\rrbracket$, then the start and end points in each piece of the continuous trajectories witness the formula $\reach_{H,U}(k,M)$.
\qed

\end{proof}

\end{document}